\documentclass[11pt]{llncs}
\usepackage{amsmath}
\usepackage{amssymb}
\usepackage{epic,gastex}
\usepackage{array}

\newcommand{\sa}{synchronizing automata}
\newcommand{\san}{synchronizing automaton}
\newcommand{\sw}{reset word}
\newcommand{\sws}{reset words}
\newcommand{\ssw}{reset word of minimum length}
\newcommand{\rl}{reset length}

\DeclareSymbolFont{rsfscript}{OMS}{rsfs}{m}{n}
\DeclareSymbolFontAlphabet{\mathrsfs}{rsfscript}

\begin{document}
\title{Slowly synchronizing automata and digraphs\thanks{Supported
by the Russian Foundation for Basic Research, grants 09-01-12142
and 10-01-00524, and by the Federal Education Agency of Russia,
grant 2.1.1/3537.}}

\titlerunning{Slowly synchronizing automata and digraphs}

\author{D. S. Ananichev \and V. V. Gusev \and M. V. Volkov}

\authorrunning{D. S. Ananichev, V. V. Gusev, M. V. Volkov}

\tocauthor{D. S. Ananichev, V. V. Gusev, M. V. Volkov
(Ekaterinburg, Russia)}

\institute{Department of Mathematics and Mechanics,\\
Ural State University, 620083 Ekaterinburg, RUSSIA\\
\email{Dmitry.Ananichev@usu.ru, vl.gusev@gmail.com,
Mikhail.Volkov@usu.ru}}

\maketitle

\begin{abstract}
We present several infinite series of \sa\ for which
the minimum length of reset words is close to the
square of the number of states. These automata are
closely related to primitive digraphs with large exponent.
\end{abstract}

\section{Background and overview}
\label{intro}

A \emph{complete deterministic finite automaton} (DFA) is a triple
$\mathrsfs{A}=\langle Q,\Sigma,\delta\rangle$, where $Q$ and
$\Sigma$ are finite sets called the \emph{state set} and the
\emph{input alphabet} respectively, and $\delta:Q\times\Sigma\to
Q$ is a totally defined function called the \emph{transition
function}. Let $\Sigma^*$ stand for the collection of all finite
words over the alphabet $\Sigma$, including the empty word.
The function $\delta$ extends to a function $Q\times\Sigma^*\to Q$
(still denoted by $\delta$) in the following natural way: for every
$q\in Q$ and $w\in\Sigma^*$, we set $\delta(q,w)=q$ if $w$ is empty
and $\delta(q,w)=\delta(\delta(q,v),a)$ if $w=va$ for some $v\in\Sigma^*$
and $a\in\Sigma$. Thus, via $\delta$, every word $w\in\Sigma^*$ acts
on the set $Q$.

A DFA $\mathrsfs{A}=\langle Q,\Sigma,\delta\rangle$ is called
\emph{synchronizing} if the action of some word $w\in\Sigma^*$
resets $\mathrsfs{A}$, that is, leaves the automaton in one
particular state no matter at which state in $Q$ it is applied:
$\delta(q,w)=\delta(q',w)$ for all $q,q'\in Q$. Any such word $w$
is said to be a \emph{reset word} for the DFA. The minimum length
of reset words for $\mathrsfs{A}$ is called the \emph{\rl} of
$\mathrsfs{A}$.

Synchronizing automata serve as transparent and natural models of
error-resistant systems in many applications (coding theory, robotics,
testing of reactive systems) and also reveal interesting connections
with symbolic dynamics and other parts of mathematics. For a brief
introduction to the theory of \sa\ we refer the reader to the recent
surveys~\cite{Sa05,Vo08}. Here we focus on the so-called \v{C}ern\'{y}
conjecture that constitutes a major open problem in this area.

In~1964 \v{C}ern\'{y}~\cite{Ce64} constructed for each $n>1$ a \san\
$\mathrsfs{C}_n$ with $n$ states whose reset length is $(n-1)^2$.
Soon after that he conjectured that these automata represent
the worst possible case, that is, every \san\ with $n$ states
can be reset by a word of length $(n-1)^2$. This simply looking conjecture
resists researchers' efforts for more than 40 years. Even though the
conjecture has been confirmed for various restricted classes of \sa\
(cf., e.g., \cite{Ep90,Du98,Ka03,Tr07,Tr08,AS09,Vo09}), no upper bound of magnitude
$O(n^2)$ for the \rl\ of $n$-state \sa\ is known in general. The best upper
bound achieved so far is $\frac{n^3-n}6$, see~\cite{Pi83}.

One of the difficulties that one encounters when approaching the \v{C}ern\'{y}
conjecture is that there are only very few \emph{extreme} automata, that is,
$n$-state \sa\ with reset length $(n-1)^2$. In fact, the \v{C}ern\'{y} series
$\mathrsfs{C}_n$ is the only known infinite series of extreme automata. Besides
that, only a few isolated examples of such automata have been found, see~\cite{Vo08}
for a complete list. Moreover, even \emph{slowly} \sa, that is, automata with reset
length close to the \v{C}ern\'{y} bound are very rare. This empirical observation
is supported also by probabilistic arguments. For instance, the probability that
a composition of $2n$ random self-maps of a set of size $n$ is a constant map
tends to~1 as $n$ goes to infinity~\cite{Hi88}. In terms of automata, this
result means that the \rl\ of a random automaton with $n$ states and at least
$2n$ input letters does not exceed $2n$. For further results of the same
flavor see~\cite{SZ}. Thus, there is no hope to find new examples of slowly
\sa\ by a lucky chance or via a random sampling experiment.

We therefore have designed and performed a set of exhaustive search
experiments. Our experiments are briefly described in Section~\ref{experiments}
while the main body of the paper is devoted to a theoretical analysis
of their outcome. We concentrate on two principal issues. In Section~\ref{matrices}
we discuss a similarity between the distribution of reset lengths of \sa\
and the distribution of exponents of primitive digraphs. Section~\ref{sss}
presents a few series of slowly \sa. Most of these series have been expanded
from new examples discovered in the course of our experiments. In our opinion,
the proof technique is also of interest; in fact, we provide a transparent
and uniform approach to all sufficiently large slowly \sa\ with 2~input letters,
both new and already known ones.

\section{Preliminaries}
\label{preliminaries}

We start with recalling two elementary and well-known number-theoretic results.
\begin{lemma}[{\mdseries\cite[Theorem 1.0.1]{RaAl05}}]
\label{schur}
If $k_1,\dots,k_m$ are positive integers whose greatest common divisor is equal
to\/ $1$, then there exists an integer $N$ such that every integer larger than $N$
is expressible as a non-negative integer combination of $k_1,\dots,k_m$.
\end{lemma}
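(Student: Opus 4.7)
The plan is to reduce the statement to a question about residues modulo $k_1$ and then invoke the Bezout identity. Since $\gcd(k_1,\dots,k_m)=1$, there exist integers $a_1,\dots,a_m$ with $a_1k_1+\dots+a_mk_m=1$; this Bezout relation is the tool I will lean on throughout.

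First I would establish the following residue lemma: for every $r\in\{0,1,\dots,k_1-1\}$ there exists a \emph{non-negative} integer combination $s_r$ of $k_2,\dots,k_m$ satisfying $s_r\equiv r\pmod{k_1}$. Starting from the equality $r=\sum_{i=1}^{m}(ra_i)k_i$ and reducing modulo $k_1$ gives $\sum_{i\ge 2}(ra_i)k_i\equiv r\pmod{k_1}$. The integer coefficients $ra_i$ may of course be negative, but replacing each $ra_i$ with $ra_i+Ck_1$ for a single sufficiently large $C$ preserves the residue while forcing non-negativity. Collecting the resulting positive combinations yields a valid $s_r$ for every residue class.

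With the $s_r$ in hand I would then set $N=\max_{0\le r<k_1}s_r$. For any integer $n>N$, let $r$ be the residue of $n$ modulo $k_1$; then $n-s_r$ is non-negative (since $n>N\ge s_r$) and divisible by $k_1$, so $n-s_r=ck_1$ for some $c\ge 0$. Hence $n=ck_1+s_r$ is a non-negative integer combination of $k_1,\dots,k_m$, as required.

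The only delicate point I anticipate is the construction of the representatives $s_r$, that is, passing from an arbitrary integer representation of a given residue class to a non-negative one. The simultaneous shift of every coefficient by $Ck_1$ (which is invisible modulo $k_1$) is what makes this step go through cleanly, and once it is in place the rest of the argument is elementary bookkeeping.
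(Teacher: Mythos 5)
Your proof is correct. Note that the paper does not prove this lemma at all: it is quoted verbatim from Ram\'{\i}rez Alfons\'{\i}n's monograph on the Frobenius problem, so there is no in-paper argument to compare against. Your argument is the standard residue-class one (in the language of numerical semigroups, you are constructing the Ap\'ery set of the semigroup with respect to $k_1$): Bezout gives, for each residue $r$ modulo $k_1$, some integer combination of $k_2,\dots,k_m$ congruent to $r$, the uniform shift of each coefficient by $Ck_1$ legitimately forces non-negativity without disturbing the residue, and then every $n$ exceeding $N=\max_r s_r$ decomposes as $s_r$ plus a non-negative multiple of $k_1$. All steps check out, including the degenerate case $m=1$ (where $k_1=1$ and the empty combination handles the single residue class). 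The only cosmetic remark is that your bound $N$ is far from optimal, but the lemma only asserts existence of some $N$, so this costs nothing.
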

The question of how the least $N$ with the property stated in Lemma~\ref{schur}
depends on the integers $k_1,\dots,k_m$ is known as the \emph{diophantine Frobenius
problem} and in general is highly non-trivial, see~\cite{RaAl05}. There is, however,
a simple special case which we will need in Section~\ref{sss}.
\begin{lemma}[{\mdseries\cite[Theorem 2.1.1]{RaAl05}}]
\label{sylvester}
If $k_1,k_2$ are relatively prime positive integers, then $k_1k_2-k_1-k_2$ is
the largest integer that is not expressible as a non-negative integer combination
of $k_1$ and $k_2$.
\end{lemma}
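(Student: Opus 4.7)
The plan is to split the claim into two complementary parts: (a) every integer $n>k_1k_2-k_1-k_2$ is expressible as $n=ak_1+bk_2$ with $a,b\geq0$; and (b) the value $k_1k_2-k_1-k_2$ itself is not so expressible. Together these give the lemma. Both parts are elementary and rest only on $\gcd(k_1,k_2)=1$ together with a Bezout-style residue count; I would not invoke Lemma~\ref{schur}, since here one can produce the threshold explicitly.

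For part (a), I would exploit the fact that, because $\gcd(k_1,k_2)=1$, the multiples $0\cdot k_2,\,1\cdot k_2,\ldots,(k_1-1)\cdot k_2$ form a complete system of residues modulo $k_1$. Given $n\geq(k_1-1)(k_2-1)=k_1k_2-k_1-k_2+1$, select the unique $b\in\{0,1,\ldots,k_1-1\}$ with $bk_2\equiv n\pmod{k_1}$, and set $a=(n-bk_2)/k_1$, which is automatically an integer. It only remains to verify $a\geq 0$, equivalently $n-bk_2\geq 0$. The estimate $n-bk_2\geq (k_1-1)(k_2-1)-(k_1-1)k_2=-(k_1-1)$ shows that $n-bk_2$ is a multiple of $k_1$ strictly larger than $-k_1$, hence non-negative.

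For part (b), I would argue by contradiction: if $k_1k_2-k_1-k_2=ak_1+bk_2$ with $a,b\geq 0$, then $k_1k_2=(a+1)k_1+(b+1)k_2$, so $k_1$ divides $(b+1)k_2$, and coprimality forces $k_1\mid b+1$. Since $b+1\geq 1$, this yields $b+1\geq k_1$; symmetrically, $a+1\geq k_2$. Substituting gives $(a+1)k_1+(b+1)k_2\geq 2k_1k_2$, contradicting the equality $=k_1k_2$.

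The only subtle step is the non-negativity check in part (a); this is where the particular threshold $(k_1-1)(k_2-1)$ enters decisively, as any smaller starting point would permit $n-bk_2$ to drop to $-k_1$ and so to become a genuinely negative multiple of $k_1$. Once this quantitative estimate is in hand, the rest of the argument is bookkeeping.
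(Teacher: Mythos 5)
The paper offers no proof of this lemma: it is quoted verbatim as a known result (Sylvester's solution of the two-generator Frobenius problem) with a citation to Ram\'{\i}rez Alfons\'{\i}n's monograph, so there is no in-paper argument to compare against. Your proof is the standard self-contained one and is correct: since $\gcd(k_1,k_2)=1$ the multiples $0\cdot k_2,\dots,(k_1-1)k_2$ exhaust the residues mod $k_1$, the bound $n-bk_2\ge -(k_1-1)>-k_1$ forces the chosen multiple of $k_1$ to be non-negative, and the divisibility argument $k_1\mid b+1$, $k_2\mid a+1$ cleanly rules out a representation of $k_1k_2-k_1-k_2$. Both halves are complete, and supplying this elementary proof makes the paper's use of the lemma (in the lower-bound arguments of Theorems~\ref{theorem:anan} and~\ref{theorem:new series}) self-contained.
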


A \emph{directed graph} (digraph) is a pair $D=\langle V,E\rangle$ where $V$ is
a finite set and $E\subseteq V\times V$. We refer to elements of $V$ and $E$ as
\emph{vertices} and \emph{edges}. Observe that our definition allows loops but
excludes multiple edges. If $v,v'\in V$ and $e=(v,v')\in E$, the edge $e$ is said
to be \emph{outgoing} for $v$. We assume the reader's acquaintance with basic
notions of the theory of directed graphs such as (directed) path,
cycle, isomorphism etc.

Given a DFA $\mathrsfs{A}=\langle Q,\Sigma,\delta\rangle$, its \emph{underlying
digraph} $D(\mathrsfs{A})$ has $Q$ as the vertex set and $(q,q')\in Q\times Q$ is
an edge of $D(\mathrsfs{A})$ if and only if $q'=\delta(q,a)$ for some $a\in\Sigma$.
It is easy to see that a digraph $D$ is isomorphic to the underlying digraph of some
DFA if and only if each vertex of $D$ has at least one outgoing edge. In the sequel,
we always consider only digraphs satisfying this property. Every DFA $\mathrsfs{A}$
such that $D\cong D(\mathrsfs{A})$ is called a \emph{coloring} of $D$. Thus, every
coloring of $D$ is defined by assigning non-empty sets of labels (colors) from some
alphabet $\Sigma$ to edges of $D$ such that the label sets assigned to the outgoing
edges of each vertex form a partition of $\Sigma$. Fig.\,\ref{fig:cerny} shows a
digraph and two of its colorings by $\Sigma=\{a,b\}$.
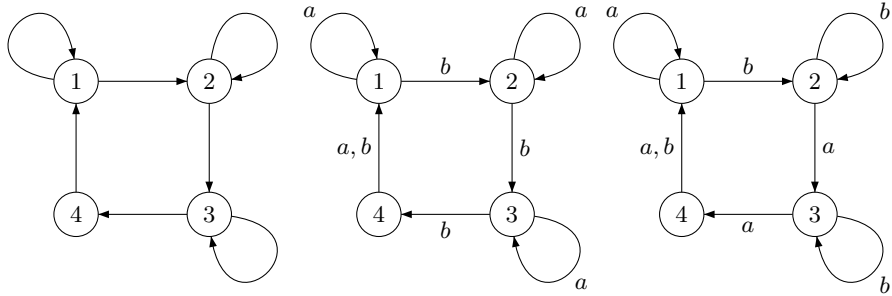
\begin{figure}[ht]
 \begin{center}
  \unitlength=2.8pt
    \begin{picture}(18,26)(-60,-4)
    \gasset{Nw=6,Nh=6,Nmr=3}
    \node(A1)(0,18){$1$}
    \node(A2)(18,18){$2$}
    \node(A3)(18,0){$3$}
    \node(A4)(0,0){$4$}
    \drawloop[loopangle=135](A1){$a$}
    \drawloop[loopangle=45](A2){$b$}
    \drawloop[loopangle=-45](A3){$b$}
    \drawedge(A1,A2){$b$}
    \drawedge(A2,A3){$a$}
    \drawedge(A3,A4){$a$}
    \drawedge(A4,A1){$a,b$}
    \end{picture}
 \begin{picture}(18,26)(0,-4)
    \gasset{Nw=6,Nh=6,Nmr=3}
    \node(A1)(0,18){$1$}
    \node(A2)(18,18){$2$}
    \node(A3)(18,0){$3$}
    \node(A4)(0,0){$4$}
    \drawloop[loopangle=135](A1){$a$}
    \drawloop[loopangle=45](A2){$a$}
    \drawloop[loopangle=-45](A3){$a$}
    \drawedge(A1,A2){$b$}
    \drawedge(A2,A3){$b$}
    \drawedge(A3,A4){$b$}
    \drawedge(A4,A1){$a,b$}
    \end{picture}
 \begin{picture}(18,26)(60,-4)
    \gasset{Nw=6,Nh=6,Nmr=3}
    \node(A1)(0,18){$1$}
    \node(A2)(18,18){$2$}
    \node(A3)(18,0){$3$}
    \node(A4)(0,0){$4$}
    \drawloop[loopangle=135](A1){}
    \drawloop[loopangle=45](A2){}
    \drawloop[loopangle=-45](A3){}
    \drawedge(A1,A2){}
    \drawedge(A2,A3){}
    \drawedge(A3,A4){}
    \drawedge(A4,A1){}
    \end{picture}
 \end{center}
 \caption{A digraph and two of its colorings}
 \label{fig:cerny}
\end{figure}

The \emph{matrix} of a digraph $D=\langle V,E\rangle$ is just the incidence
matrix of the edge relation, that is, a $V\times V$-matrix whose entry in
the row $v$ and the column $v'$ is 1 if $(v,v')\in E$ and 0 otherwise. For
instance, the matrix of the digraph in Fig.~1 (with respect to the chosen
numbering of its vertices) is $\left(\begin{smallmatrix}1&1&0&0\\
0&1&1&0\\ 0&0&1&1\\ 1&0&0&0 \end{smallmatrix}\right)$. Conversely, given
an $n\times n$-matrix $P=(p_{ij})$ with non-negative real entries, we assign
to it a digraph $D(P)$ on the set $\{1,2,\dots,n\}$ as follows: $(i,j)$ is an
edge of $D(P)$ if and only if $p_{ij}>0$. This ``two-way'' correspondence
allows us to formulate in terms of digraphs several important for the sequel
notions and results which originated in the classical Perron--Frobenius theory
of non-negative matrices.

Recall that a digraph $D=\langle V,E\rangle$ is said to be \emph{strongly
connected} if for every pair $(v,v')\in V\times V$, there exists a path
from $v$ to $v'$. By the $t^{th}$ \emph{power}
of $D$ we mean the digraph $D^t$ with the same vertex set $V$, such that
$(v,v')\in V\times V$ is an edge of $D^t$ if and only if there is a path
in $D$ from $v$ to $v'$ of length precisely $t$. If $M$ is the matrix of $D$,
then the digraph $D^t$ can be equivalently defined as $D(M^t)$, where
$M^t$ is the usual $t^{th}$ power of $M$.

A strongly connected digraph $D$ is called \emph{primitive} if the greatest
common divisor of the lengths of all cycles in $D$ is equal to~1. (In the
literature such graphs are sometimes called \emph{aperiodic}.)
Lemma~\ref{schur} readily implies that if $D$ is a primitive digraph, then
in some power $D^t$ of $D$ every pair of vertices constitutes an edge, i.e.,
$D^t$ is a complete digraph with loops. (This is equivalent to saying that
every entry of the matrix $M^t$, where $M$ is the matrix of $D$, is positive.)
The least $t$ with this property is called the \emph{exponent} of the digraph $D$
and is denoted by $\gamma(D)$. We need some results on exponents of digraphs
summarized as follows.

\begin{theorem}
\label{dulmage}
\emph{(a) (Wielandt's theorem, see \cite{Wi50,DM62}, \cite[Theorem~1]{DM64})}
If a primitive graph $D$ has $n$~vertices, then  $\gamma(D)\le(n-1)^2+1$.

\emph{(b) \cite[Theorem~6 and Corollary 4]{DM64}} Up to isomorphism, there is
exactly one primitive digraph $D$ on $n>2$ vertices with $\gamma(D)=(n-1)^2+1$,
and exactly one with $\gamma(D)=(n-1)^2$. The matrices of the digraphs are
\begin{equation}
\label{wielandt}
\begin{pmatrix}
0 & 1 & 0 & \dots & 0 & 0\\
0 & 0 & 1 & \dots & 0 & 0\\
\hdotsfor{6}\\
0 & 0 & 0 & \dots & 0 & 1\\
1 & 1 & 0 & \dots & 0 & 0
\end{pmatrix} \text{ and }
\begin{pmatrix}
0 & 1 & 0 & \dots & 0 & 0\\
0 & 0 & 1 & \dots & 0 & 0\\
\hdotsfor{6}\\
1 & 0 & 0 & \dots & 0 & 1\\
1 & 1 & 0 & \dots & 0 & 0
\end{pmatrix}
\text{ respectively.}
\end{equation}

\emph{(c) \cite[Theorem~7]{DM64}} If $n>4$ is even, then there is no
primitive digraph $D$ on $n$ vertices such that $n^2-4n+6<\gamma(D)<(n-1)^2$,
and, up to isomorphism, there are either $3$ or $4$ primitive
digraphs $D$ on $n$ vertices with $\gamma(D)=n^2-4n+6$, according
as $n$ is or is not a~multiple of $3$.

\emph{(d) \cite[Theorem~8]{DM64}} If $n>3$ is odd, then there is no
primitive digraph $D$ on $n$ vertices such that $n^2-3n+4<\gamma(D)<(n-1)^2$,
and, up to isomorphism, there is exactly one primitive digraph $D$ on $n$ vertices
with $\gamma(D)=n^2-3n+4$, exactly one with $\gamma(D)=n^2-3n+3$, and
exactly two with $\gamma(D)=n^2-3n+2$. The matrices of these digraphs are:
\begin{equation}
\label{odd island}
\begin{pmatrix}
0 & 1 & 0 & \dots & 0 & 0\\
0 & 0 & 1 & \dots & 0 & 0\\
\hdotsfor{6}\\
0 & 0 & 0 & \dots & 1 & 0\\
0 & 0 & 0 & \dots & 0 & 1\\
1 & 0 & 1 & \dots & 0 & 0
\end{pmatrix}\!,
\begin{pmatrix}
0 & 1 & 0 & \dots & 0 & 0\\
0 & 0 & 1 & \dots & 0 & 0\\
\hdotsfor{6}\\
0 & 0 & 0 & \dots & 1 & 0\\
0 & 1 & 0 & \dots & 0 & 1\\
1 & 0 & 1 & \dots & 0 & 0
\end{pmatrix}\!,
\begin{pmatrix}
0 & 1 & 0 & \dots & 0 & 0\\
0 & 0 & 1 & \dots & 0 & 0\\
\hdotsfor{6}\\
1 & 0 & 0 & \dots & 1 & 0\\
0 & 1 & 0 & \dots & 0 & 1\\
1 & 0 & 1 & \dots & 0 & 0
\end{pmatrix}\!,
\begin{pmatrix}
0 & 1 & 0 & \dots & 0 & 0\\
0 & 0 & 1 & \dots & 0 & 0\\
\hdotsfor{6}\\
1 & 0 & 0 & \dots & 1 & 0\\
0 & 0 & 0 & \dots & 0 & 1\\
1 & 0 & 1 & \dots & 0 & 0
\end{pmatrix}\!.
\end{equation}

\emph{(e) \cite[Theorem~8]{DM64}} If $n>3$ is odd, then there is no
primitive digraph $D$ on $n$ vertices such that $n^2-4n+6<\gamma(D)<n^2-3n+2$,
and, up to isomorphism, there are either $3$ or $4$ primitive digraphs $D$
on $n$ vertices with $\gamma(D)=n^2-4n+6$, according as $n$ is or is not
a~multiple of $3$.
\end{theorem}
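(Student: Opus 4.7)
The plan is to follow the Dulmage--Mendelsohn strategy of classifying primitive digraphs on $n$ vertices by the length $s$ of their shortest cycle, and to show that only very restricted structures can yield exponents above $n^2-4n+6$. The quantitative backbone is the classical bound $\gamma(D)\le n+s(n-2)$, which immediately gives $\gamma(D)\le n^2-4n+6$ as soon as $s\le n-3$. Hence any $D$ falling in the open interval $(n^2-4n+6,\,n^2-3n+2)$ must have $s\in\{n-2,n-1,n\}$, and $s=n$ is excluded because a pure $n$-cycle has cycle-length gcd equal to $n>1$, violating primitivity.

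I would then handle the two remaining cases separately. For $s=n-1$, primitivity forces at least one additional cycle of length coprime to $n-1$ (which is even, since $n$ is odd); fitting such a cycle into an $n$-vertex digraph leaves only a handful of configurations, each obtained from a distinguished $(n-1)$-cycle by specifying how the extra vertex and the extra chord attach. For each configuration $\gamma(D)$ can be computed exactly by combining Lemma~\ref{sylvester} applied to the pair of coprime cycle lengths with a shortest-path analysis between ordered pairs of vertices. The attainable values in the relevant range turn out to be exactly $(n-1)^2+1$, $n^2-3n+4$, $n^2-3n+3$, and $n^2-3n+2$, confirming the gap. The case $s=n-2$ is handled analogously: the shortest cycle has odd length $n-2$, and the possible coprime companion cycles (of length $n-1$ or $n$) lead to the digraphs that realise $\gamma(D)=n^2-4n+6$, together with some configurations whose exponents lie at or below that value.

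The split between $3$ and $4$ isomorphism classes arises in this final enumeration: one of the four candidate graphs realising $\gamma(D)=n^2-4n+6$ degenerates, or coincides with another under the cyclic symmetry of the $(n-2)$-cycle, precisely when $3\mid n$, leaving only three distinct classes in that case. The main obstacle is the bookkeeping here: one has to verify that the normal forms exhaust all possibilities, that no two configurations are accidentally isomorphic, and that each exponent computation is correct. I would organise this by fixing the distinguished cycle in normal form and recording chord positions modulo the cyclic group acting on it, so that the isomorphism check reduces to comparing chord patterns and each exponent reduces to a small Frobenius-type calculation handled by Lemma~\ref{sylvester}.
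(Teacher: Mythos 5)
The paper offers no proof of Theorem~\ref{dulmage} at all: it is background material quoted from Wielandt \cite{Wi50} and Dulmage--Mendelsohn \cite{DM62,DM64}, so there is no in-paper argument to measure your sketch against. Taken on its own terms, your outline does follow the genuine Dulmage--Mendelsohn strategy: stratify primitive digraphs by the girth $s$ and use the bound $\gamma(D)\le n+s(n-2)$, which disposes of every $D$ with $s\le n-3$ (giving $\gamma\le n^2-4n+6$) and excludes $s=n$ by primitivity, so only $s\in\{n-2,n-1\}$ can reach the upper part of the spectrum.

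The case analysis as you describe it, however, is wrong in a way that would derail the enumeration. For $n$ odd, all four digraphs in \eqref{odd island} --- the ones realising $n^2-3n+4$, $n^2-3n+3$ and $n^2-3n+2$ --- have girth $n-2$, not $n-1$: each consists of an $(n-2)$-cycle together with a Hamiltonian cycle (coprime precisely because $n$ is odd), and $n+(n-2)(n-2)=n^2-3n+4$ is exactly the ceiling of your bound for $s=n-2$, so the girth-$(n-2)$ stratum cannot even enter the interval $(n^2-3n+4,(n-1)^2)$ whose emptiness part~(d) asserts; that gap must be closed inside the $s=n-1$ stratum. Conversely, your list of values attainable with $s=n-1$ omits $(n-1)^2$ itself, which is realised by the second matrix of \eqref{wielandt} (girth $n-1$), while the values $n^2-3n+4,\,n^2-3n+3,\,n^2-3n+2$ are, by the uniqueness claims of part~(d), \emph{not} realised by any girth-$(n-1)$ digraph. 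Note also that for $s=n-1$ every cycle has length $n-1$ or $n$, so the only admissible coprime companion is a Hamiltonian cycle --- this is what pins the top two configurations down to the two matrices of \eqref{wielandt}. Finally, part~(c) (even $n$, where $\gcd(n-2,n)=2$ forces the companion of an $(n-2)$-cycle to have length $n-1$ and changes the whole enumeration), the exact uniqueness counts in (b) and (d), and the correct source of the $3$-versus-$4$ dichotomy at $n^2-4n+6$ (when $3\mid n$ the candidate built from an $(n-3)$-cycle plus a Hamiltonian cycle fails to be primitive, since $\gcd(n-3,n)=3$; it is not an isomorphism coincidence) are all left unaddressed, so the sketch as it stands does not yield the theorem.
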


\section{Exponents of digraphs vs lengths of reset words}
\label{matrices}

As mentioned in Section~\ref{intro}, this paper has grown
from certain observations made when we analyzed experimental
results. One such observation has been a similarity between
the ``upper parts'' of two sequences: the sequence of possible
reset lengths of 2-letter \sa\ with $n$ states and the sequence
of possible exponents of primitive digraphs with $n$ vertices.
As it is clear from Theorem~\ref{dulmage}, the upper part of the
latter sequence has certain gaps whose sizes and positions depend
on the parity of $n$; our experiments have revealed a similar pattern
of gaps in the upper part of the former sequence. Table~\ref{9 states}
illustrates this observation for $n=9$.

\begin{table}[hbt]
\extrarowheight=1pt
\caption{Exponents of primitive digraphs with $9$ vertices
vs lengths of shortest reset words for 2-letter \sa\ with $9$ states}\label{9 states}
\begin{tabular}{|p{5.4cm}||c|c|c|c|c|c|c|c|c|c|c|c|c|c|c|}
\hline
\centering{$N$} & 65 & 64 & 63 & 62 & 61 & 60 & 59 & 58 & 57 & 56 & 55 & 54 & 53 & 52 & 51 \\
\hline
\raggedright{Number of non-isomorphic primitive digraphs
with exponent} $N$
& \raisebox{-6pt}{1} & \raisebox{-6pt}{1} & \raisebox{-6pt}{0} & \raisebox{-6pt}{0} & \raisebox{-6pt}{0}
& \raisebox{-6pt}{0} & \raisebox{-6pt}{0} & \raisebox{-6pt}{1} & \raisebox{-6pt}{1} & \raisebox{-6pt}{2}
& \raisebox{-6pt}{0} & \raisebox{-6pt}{0} & \raisebox{-6pt}{0} & \raisebox{-6pt}{0} & \raisebox{-6pt}{4} \\
\hline
\raggedright{Number of non-isomorphic 2-letter \sa\ with
reset length $N$}
&\raisebox{-11pt}{0} &\raisebox{-11pt}{1} &\raisebox{-11pt}{0} &\raisebox{-11pt}{0} &\raisebox{-11pt}{0}
&\raisebox{-11pt}{0} &\raisebox{-11pt}{0} &\raisebox{-11pt}{1} &\raisebox{-11pt}{2} &\raisebox{-11pt}{3}
&\raisebox{-11pt}{0} &\raisebox{-11pt}{0} &\raisebox{-11pt}{0} &\raisebox{-11pt}{4} &\raisebox{-11pt}{4} \\
\hline
\end{tabular}
\end{table}
The data in the second row of Table~\ref{9 states} are calculated
from Theorem~\ref{dulmage}, while the data in the third row come from
our experiments.

Concerning gaps in the upper part of the sequence of possible reset lengths of
2-letter \sa\ with a given number of states, we notice that the first gap was
registered in earlier experiments. (Namely, according to~\cite{Tr06,Tr06a},
for $n=7,8,9,10$ there exists no 2-letter \sa\ with $n$ states with reset lengths
between $n^2-2n$ and $n^2-3n+5$.) However, to the best of our knowledge, the second
gap as seen in Table~\ref{9 states} has not been reported in the literature up to now.

We strongly believe that the observed similarity is more than a coincidence.
Clearly, there are deep connections between primitive digraphs and \sa. Indeed,
it is well known (see~\cite{AGW}) that if the underlying digraph of a \san\
is strongly connected that the digraph must be primitive; on the other hand,
as follows from Trahtman's proof~\cite{Tr09} of the so-called Road Coloring
conjecture by Adler, Goodwyn, and Weiss~\cite{AGW}, every primitive digraph
admits a synchronizing coloring. This, however, does not suffice to explain
similarities such as in Table~\ref{9 states} because many of slowly \sa\
``responsible'' for non-zero entries in the third row cannot be obtained
as colorings of primitive digraphs with large exponents corresponding
to non-zero entries in the second row. In the next section we demonstrate
some new connections between primitive digraphs with large exponents and
slowly \sa\ with two input letters. In this way, we derive all known
series of such automata and construct many new ones.

\section{Some series of slowly \sa}
\label{sss}

Due to space limitations, we present here only a part of our results on slowly \sa.
Namely, we restrict ourselves to series derived from three of the primitive digraphs
whose matrices are listed in Theorem~\ref{dulmage}. These series, in particular,
ensure that the ``island'' of reset lengths between $n^2-3n+2$ and
$n^2-3n+4$ exists for each $n$.

We start with the digraph $W_n$ corresponding to the first matrix in~\eqref{wielandt}.
The digraph (more precisely, its matrix) first appeared in Wielandt's seminal paper~\cite{Wi50}.
It has $n$ vertices $1,2,\dots,n$, say, and the following $n+1$ edges: $(i,i+1)$ for
$i=1,\dots,n-1$, $(n,1)$, and $(n,2)$.

It is easy to see that, up to isomorphism and renaming of letters, there exists
a unique coloring of the digraph $W_n$ by two letters. Let $\mathrsfs{W}_n$ denote
this coloring. Fig.\,\ref{fig:anan} shows the digraph $W_n$ and the DFA $\mathrsfs{W}_n$.
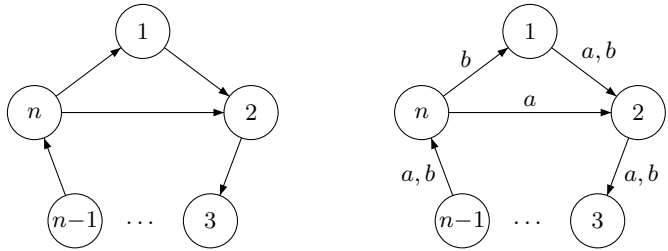
\begin{figure}[ht]
\begin{center}
\unitlength .45mm
\begin{picture}(72,56)(20,-72)
\gasset{Nw=16,Nh=16,Nmr=8}
\node(n0)(36.0,-16.0){1}
\node(n1)(4.0,-40.0){$n$} \node(n2)(68.0,-40.0){2}
\node(n3)(16.0,-72.0){$n{-}1$} \node(n4)(56.0,-72.0){3}
\drawedge[ELdist=2.0](n1,n0){} \drawedge[ELdist=1.5](n2,n4){}
\drawedge[ELdist=1.7](n0,n2){}
\drawedge[ELdist=1.7](n3,n1){} \drawedge[ELdist=2.0](n1,n2){}
\put(31,-73){$\dots$}
\end{picture}
\begin{picture}(72,56)(-20,-72)
\gasset{Nw=16,Nh=16,Nmr=8}
\node(n0)(36.0,-16.0){1}
\node(n1)(4.0,-40.0){$n$} \node(n2)(68.0,-40.0){2}
\node(n3)(16.0,-72.0){$n{-}1$} \node(n4)(56.0,-72.0){3}
\drawedge[ELdist=2.0](n1,n0){$b$} \drawedge[ELdist=1.5](n2,n4){$a, b$}
\drawedge[ELdist=1.7](n0,n2){$a, b$}
\drawedge[ELdist=1.7](n3,n1){$a, b$} \drawedge[ELdist=2.0](n1,n2){$a$}
\put(31,-73){$\dots$}
\end{picture}
\end{center}
\caption{The digraph $W_n$ and its unique coloring $\mathrsfs{W}_n$}\label{fig:anan}
\end{figure}

\begin{theorem}
\label{theorem:anan}
The automaton $\mathrsfs{W}_n$ is synchronizing and its reset length is $n^2-3n+3$.
\end{theorem}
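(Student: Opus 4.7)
My plan is to prove the theorem in two halves: first exhibit a reset word of length $n^2-3n+3$ (which also proves $\mathrsfs{W}_n$ is synchronizing), and then show no shorter reset word exists. For the \emph{upper bound} I propose the word
\[
  w \;=\; a\,(b^{n-2}a)^{n-2},
\]
of length $1+(n-2)(n-1) = n^2-3n+3$. Since $b$ is the cyclic permutation $i\mapsto(i\bmod n)+1$ and $a$ agrees with $b$ except at state $n$ (where $a(n)=2$ instead of $1$), the letter $a$ collapses a subset $S\subseteq Q$ exactly when $\{1,n\}\subseteq S$, and otherwise preserves $|S|$. I will argue by induction on $k\in\{0,1,\ldots,n-2\}$ that the prefix $a\,(b^{n-2}a)^k$ sends $Q$ to the run $\{2,3,\ldots,n-k\}$. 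The base case is $a(Q)=\{2,3,\ldots,n\}$; for the inductive step, $b^{n-2}$ (a $-2$ rotation modulo $n$) carries $\{2,3,\ldots,n-k\}$ onto $\{n,1,2,\ldots,n-k-2\}$, which contains both $1$ and $n$, so the trailing $a$ collapses them into $2$ and yields $\{2,3,\ldots,n-k-1\}$. At $k=n-2$ the image is the singleton $\{2\}$.

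For the \emph{lower bound}, let $w=x_1\cdots x_m$ be any reset word and put $U_i=\delta(Q,x_1\cdots x_i)$. Since $b$ is a permutation of $Q$ and the only collision for $a$ is that $1$ and $n$ both map to $2$, the size $|U_i|$ strictly drops only when $x_i=a$ and $\{1,n\}\subseteq U_{i-1}$; in particular $w$ contains exactly $n-1$ collapsing $a$'s. For every $S\subseteq Q$ let $r(S)$ denote the minimum length of a word that sends $S$ to a singleton, so that $r(\{q\})=0$ and $r(S)=1+\min_{x\in\{a,b\}}r(x(S))$ otherwise. I plan to prove by induction on $\ell$ the closed form
\[
  r(\{j,j+1,\ldots,j+\ell-1\}) \;=\; (\ell-1)(n-1)+2-j
\]
for every $\ell\in\{2,3,\ldots,n-1\}$ and every starting point $j\in\{1,2,\ldots,n\}$, with indices taken cyclically modulo $n$. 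Since the first letter of any reset word must be $a$ (because $b$ fixes $Q$) and $a(Q)=\{2,\ldots,n\}$, applying this formula with $\ell=n-1$ and $j=2$ yields $r(Q)=1+r(\{2,\ldots,n\})=1+(n-2)(n-1)=n^2-3n+3$, and the telescoping inequality $0=r(U_m)\ge r(U_0)-m$ then gives $m\ge n^2-3n+3$.

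The \emph{main obstacle} is that from the run $\{j,\ldots,j+\ell-1\}$ some letter might lead outside the family of runs, and I must rule out shortcuts through those non-run shapes. For instance, from the run $\{3,4,5\}$ in the case $n=5$ the letter $a$ produces the non-run $\{2,4,5\}$, and to validate the closed form I need the $b$-move (which stays in the run family via a wrap-around) to be optimal, i.e.\ $r(\{2,4,5\})\ge r(\{1,4,5\})$. The bulk of the work will be to carry auxiliary inductive bounds on representative non-run shapes---especially sets obtained from a run by a single non-collapsing application of $a$, which have a gap vector of the form $(2,1,\ldots,1,2)$---and to verify at each step of the recursion that $\min_{x\in\{a,b\}}r(x(S))$ is realised by a move consistent with the claimed formula. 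I expect this to amount to a small dynamic-programming table comparing three types of transitions (rotation, non-collapsing $a$-transfer, and collapsing $a$), so that the interaction between "shape" and "remaining size" is captured uniformly rather than by a naive per-phase bound, which, as the computation for $n=5$ shows, would otherwise miss the value $n^2-3n+3$.
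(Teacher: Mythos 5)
Your upper bound is complete and agrees with the paper: the word $a(b^{n-2}a)^{n-2}$ is literally the same string as the paper's $(ab^{n-2})^{n-2}a$, and your induction on the runs $\{2,\dots,n-k\}$ is a correct verification. The lower bound, however, is not a proof but a plan, and the part you defer is exactly the hard part. You reduce everything to the closed form $r(\{j,\dots,j+\ell-1\})=(\ell-1)(n-1)+2-j$, whose inductive verification requires showing that $\min_x r(x(S))$ is never achieved by a move leaving the family of cyclic runs. As you yourself note, a single non-collapsing $a$ applied to a run containing $n$ but not $1$ already produces a set with gap vector $(2,1,\dots,1,*)$, and iterating $a$ and $b$ from such sets generates further shapes; to close the induction you would need lower bounds on $r$ for \emph{every} reachable shape, not just ``representative'' ones, and you give no invariant or potential function that controls them uniformly. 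Phrases like ``the bulk of the work will be'' and ``I expect this to amount to a small dynamic-programming table'' mark this as an unresolved gap: for fixed small $n$ one can indeed finish by exhaustive computation, but nothing in the proposal yields the bound for all $n$.

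For comparison, the paper's lower bound avoids subset dynamics entirely. It first shows a minimal reset word must end at state $2$ (for any other target all incoming paths share their last edge, so the last letter could be deleted), then observes that prepending arbitrary words shows $W_n$ has a cycle through vertex $2$ of every length $\ell\ge|w|$. Since every cycle of $W_n$ decomposes into its two simple cycles of lengths $n$ and $n-1$, Lemma~\ref{sylvester} forces $|w|>n^2-3n+1$, and the case $|w|=n^2-3n+2$ is excluded by following the unique outgoing edge of vertex $1$ and applying the same lemma once more. If you want to salvage your approach, you would need to supply the missing uniform bound on non-run shapes; otherwise I recommend switching the lower bound to the cycle-length/Frobenius argument, which is both complete and far shorter.
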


\begin{proof}
It is routine to verify that the word $(ab^{n-2})^{n-2}a$, whose length
is $(n-1)(n-2)+1=n^2-3n+3$, is a reset word for $\mathrsfs{W}_n$.

Now let $w$ be a \sw\ for $\mathrsfs{W}_n$ and assume that the length
of $w$ (denoted $|w|$) is minimal. Let $j\in Q=\{1,2,\dots,n\}$ be the
state to which the action of $w$ brings $\mathrsfs{W}_n$. Then from
every state in $Q$ there is a path to $j$ labelled $w$. It is clear
that for each $j\ne 2$ all paths ending at $j$ share the last edge.
Therefore, if $j\ne 2$, removing the last letter from the word $w$
produces a word that still would be a \sw\ for $\mathrsfs{W}_n$.
We conclude that $j=2$ because $|w|$ is minimal.

If $u\in\{a,b\}^*$, the word $uw$ also is a reset word and it also
brings the automaton to the state~2. Hence, for every $\ell\ge|w|$,
there is a path of length $\ell$ in $W_n$ from any given vertex $i$
to~2. In particular, setting $i=2$, we conclude that for every
$\ell\ge|w|$ there is a cycle of length $\ell$ in $W_n$. The digraph
$W_n$ has only two simple cycles: one of length $n$ and one of length $n-1$.
Each cycle of $W_n$ must consist of these two cycles traversed several times
whence each number $\ell\ge|w|$ must be expressible as a non-negative integer
combination of $n$ and $n-1$. Here we invoke Lemma~\ref{sylvester} which
implies that $|w|>n(n-1)-n-(n-1)=n^2-3n+1$. Suppose that $|w|=n^2-3n+2$.
Then there should be a path of this length from the vertex~1 to the vertex~2.
The only outgoing edge of~1 is $(1,2)$, and thus, in the path it must be
followed by a cycle of length $n^2-3n+1$. No cycle of such length may
exist by  Lemma~\ref{sylvester}. Hence $|w|\ge n^2-3n+3$.
\end{proof}

The series $\mathrsfs{W}_n$ was discovered by the first author in 2008
(unpublished). His rather involved proof of Theorem~\ref{theorem:anan}
used a technique developed in~\cite{AVZ}.

As mentioned in Section~\ref{matrices}, Trahtman's recent result~\cite{Tr09}
implies that every primitive digraph admits a synchronizing coloring. This
gives rise to the following natural question: given a primitive digraph on
$n$ vertices, what is the minimum length of \sws\ for its synchronizing
colorings? Observe that in general underlying digraphs of slowly \sa\ may
admit colorings with rather short \sws. Fig.~\ref{fig:cerny} illustrates
this phenomenon: the first coloring of the 4-vertex digraph in Fig.~\ref{fig:cerny}
is the \v{C}ern\'{y} automaton $\mathrsfs{C}_4$ with shortest reset word of
length~9 while the second coloring can be reset of the word $a^3$ of length~3.
Wielandt's digraphs $W_n$, however, can be colored in a essentially unique
way, whence Theorem~\ref{theorem:anan} gives the lower bound $n^2-3n+3$
for the value in question. We strongly believe that this lower bound is
in fact tight, in other words, we suggest a conjecture that is in a sense
parallel to the \v{C}ern\'{y} one.
\begin{conjecture}
\label{hybrid}
Every primitive digraph on $n$ vertices admits a synchronizing coloring
that can be reset by a word of length $n^2-3n+3$.
\end{conjecture}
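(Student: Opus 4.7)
My approach is to split primitive digraphs on $n$ vertices according to their exponent, exploiting the ``island'' structure supplied by Theorem~\ref{dulmage}.

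\emph{Step~1 (extremal digraphs).} For $n>4$, the primitive digraphs $D$ with $\gamma(D)>n^2-3n+3$ form a very short finite list, explicitly enumerated by parts~(b) and~(d) of Theorem~\ref{dulmage}: the two digraphs of~\eqref{wielandt}, and, when $n$ is odd, the first digraph in~\eqref{odd island}. Each is to be handled individually, imitating the proof of Theorem~\ref{theorem:anan}: the digraph is sparse and admits, up to renaming letters, essentially one two-letter coloring, for which a reset word of length at most $n^2-3n+3$ can be written down explicitly (typically of the form $(ab^{n-2})^{n-2}a$ or a minor variant). The remaining digraphs in~\eqref{odd island}, with $\gamma\in\{n^2-3n+3,n^2-3n+2\}$, sit at the target bound and are handled in the same fashion, with the cycle-length computation via Lemma~\ref{sylvester} confirming the upper bound.

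\emph{Step~2 (bulk digraphs).} For every other primitive digraph on $n$ vertices one has $\gamma(D)\le n^2-4n+6$, which is strictly smaller than $n^2-3n+3$ for $n\ge 4$. The plan here is to apply Trahtman's road-colouring theorem~\cite{Tr09} to produce \emph{some} synchronizing coloring $\mathrsfs{A}$ of $D$, and then to bound the reset length of $\mathrsfs{A}$ in terms of $\gamma(D)$ with at most a linear correction. Intuitively, once $\gamma(D)$ steps have been taken every pair of vertices is linked by a path of that exact length, and a well-chosen colouring should be able to funnel these paths into a common sink using only $O(n)$ further letters.

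\emph{Main obstacle.} Step~2 is the serious one: no general theorem is known which converts an exponent bound on the underlying digraph into a comparable reset bound on some coloring of it, and such a conversion would already be of roughly the same strength as the \v{C}ern\'{y} conjecture itself. A quantitative refinement of Trahtman's argument that keeps track of the cycle-length multiset of $D$---mirroring how Lemma~\ref{sylvester} yields the lower bound $|w|\ge n^2-3n+3$ in the proof of Theorem~\ref{theorem:anan}---appears unavoidable. As a warm-up I would try to establish the bound for primitive digraphs whose two shortest cycles have coprime lengths both close to $n$, where Lemma~\ref{sylvester} applies directly; the general case, in which numerous cycles of varied lengths interact, is where I expect the heart of the difficulty to lie.
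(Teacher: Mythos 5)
This statement is not a theorem of the paper but an open conjecture: the authors explicitly pose it as a problem ``parallel to the \v{C}ern\'{y} conjecture'' and offer no proof, only the lower-bound evidence coming from Theorem~\ref{theorem:anan} (the unique coloring of $W_n$ needs a word of length exactly $n^2-3n+3$, so the bound cannot be improved). Your proposal therefore cannot be compared with a proof in the paper; what can be assessed is whether it closes the conjecture, and it does not. Step~1 is fine as far as it goes --- the finitely many exceptional digraphs with large exponent are exactly those enumerated in Theorem~\ref{dulmage}(b),(d), and the paper itself supplies the needed colorings (e.g.\ $\mathrsfs{W}_n$ for the first Wielandt digraph, and $\mathrsfs{D}''_n$ with reset length $n^2-3n+2$ for the second, so that $D_n$ is covered even though its other coloring $\mathrsfs{D}'_n$ exceeds the bound). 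But Step~2 is the entire content of the conjecture, and your own text concedes it is unproved.

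The specific gap is that the exponent of $D$ does not control the reset length of any coloring of $D$ in any known way. That $D^{\gamma(D)}$ is complete means only that for each ordered pair $(v,v')$ there exists \emph{some} path of length $\gamma(D)$ from $v$ to $v'$; synchronization requires a single word, i.e.\ a single sequence of colors, whose induced paths from \emph{all} vertices simultaneously land on one vertex. These are very different demands, and the heuristic that $O(n)$ extra letters suffice to ``funnel'' the paths has no basis: the underlying digraph of the \v{C}ern\'{y} automaton $\mathrsfs{C}_n$ contains loops and hence has exponent $O(n)$, yet one of its colorings has reset length $(n-1)^2$. (It also has a fast coloring, which is consistent with the conjecture, but it shows that no bound of the form ``reset length of some coloring $\le\gamma(D)+O(n)$'' can be extracted coloring-by-coloring without a genuinely new idea for \emph{choosing} the coloring.) Trahtman's road-coloring theorem gives existence of a synchronizing coloring but no quantitative bound of the required strength. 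So your plan reduces the conjecture to a statement that is itself open and, as you note, of comparable difficulty to the \v{C}ern\'{y} conjecture; this is a restatement of the problem, not a proof.
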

We observe that while there is a clear analogy between Conjecture~\ref{hybrid}
and the \v{C}ern\'{y} conjecture, the validity of none of them immediately
implies the validity of the other.

Now we discuss a less straightforward way to get a slowly synchronizing series
from Wielandt's digraphs $W_n$. Namely, we aim to show that the \v{C}ern\'{y}
automata $\mathrsfs{C}_n$ are closely related to these digraphs. First, recall
the definition of $\mathrsfs{C}_n$. We may assume that the state set of
$\mathrsfs{C}_n$ is $Q=\{1,2,\dots,n\}$ and the letters $a$ and $b$ act
on $Q$ as follows:
$$\delta(i,a)=\begin{cases}
i &\text{if } i<n,\\
1 &\text{if } i=n;
\end{cases}\quad
\delta(i,b)=\begin{cases}
i+1 &\text{if } i<n,\\
1 &\text{if } i=n.
\end{cases}$$
The automaton $\mathrsfs{C}_n$ is shown in Fig.\,\ref{fig:cerny-n} on the left.

\begin{figure}[ht]
\begin{center}
\unitlength .45mm
\begin{picture}(72,66)(25,-76)
\gasset{Nw=16,Nh=16,Nmr=8}
\node(n0)(36.0,-16.0){1}
\node(n1)(4.0,-40.0){$n$} \node(n2)(68.0,-40.0){2}
\node(n3)(16.0,-72.0){$n{-}1$} \node(n4)(56.0,-72.0){3}
\drawedge[ELdist=2.0](n1,n0){$a,b$} \drawedge[ELdist=1.5](n2,n4){$b$}
\drawedge[ELdist=1.7](n0,n2){$b$}
\drawedge[ELdist=1.7](n3,n1){$b$}
\drawloop[ELdist=1.5,loopangle=30](n2){$a$}
\drawloop[ELdist=2.4,loopangle=-30](n4){$a$}
\drawloop[ELdist=1.5,loopangle=-90](n0){$a$}
\drawloop[ELdist=1.5,loopangle=210](n3){$a$}
\put(31,-73){$\dots$}
\end{picture}
\begin{picture}(72,66)(-25,-76)
\gasset{Nw=16,Nh=16,Nmr=8}
\node(n0)(36.0,-16.0){1}
\node(n1)(4.0,-40.0){$n$} \node(n2)(68.0,-40.0){2}
\node(n3)(16.0,-72.0){$n{-}1$} \node(n4)(56.0,-72.0){3}
\drawedge[ELdist=2.0](n1,n0){$b$} \drawedge[ELdist=1.5](n2,n4){$b,c$}
\drawedge[ELdist=1.7](n0,n2){$b,c$}
\drawedge[ELdist=1.7](n3,n1){$b,c$} \drawedge[ELdist=2.0](n1,n2){$c$}
\put(31,-73){$\dots$}
\end{picture}
\end{center}
\caption{The automaton $\mathrsfs{C}_n$ and the automaton induced by the actions of $b$ and $c=ab$}\label{fig:cerny-n}
\end{figure}
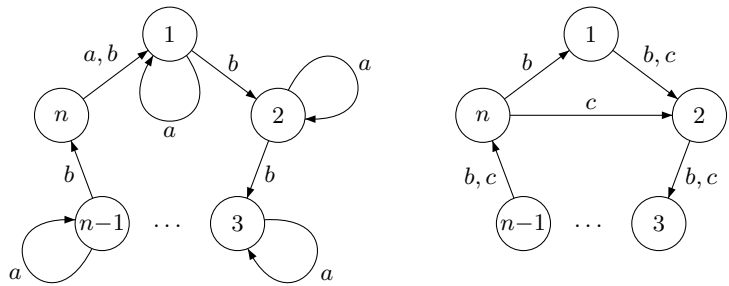

Now we present a new simple proof for the following classic result.
\begin{theorem}[{\mdseries\cite[Lemma~1]{Ce64}}]
\label{theorem:cerny}
The automaton $\mathrsfs{C}_n$ is synchronizing and its reset length is $(n-1)^2$.
\end{theorem}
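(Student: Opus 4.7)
The plan is to treat the two bounds separately. For the upper bound, I would verify directly that the word $(ab^{n-1})^{n-2}a$, whose length is $n(n-2)+1 = (n-1)^2$, resets $\mathrsfs{C}_n$. The key computation is that $f := ab^{n-1}$ fixes the state $n$, sends $1 \mapsto n$, and shifts $i \mapsto i-1$ for $2 \le i \le n-1$; so $f$ absorbs the current smallest ``live'' state into $n$ at each iteration. After $n-2$ iterations only $\{1,n\}$ remains, and the final $a$ merges them to $\{1\}$.

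For the lower bound my strategy rests on the observation already drawn in Fig.~\ref{fig:cerny-n}: the automaton on $Q=\{1,\dots,n\}$ induced by the letters $b$ and $c := ab$ is, up to renaming, exactly Wielandt's $\mathrsfs{W}_n$. So I would try to reduce a \ssw\ for $\mathrsfs{C}_n$ to one for $\mathrsfs{W}_n$ and then invoke Theorem~\ref{theorem:anan}. Let $w$ be a shortest reset word for $\mathrsfs{C}_n$. Because $a^2$ and $a$ induce the same transformation of $Q$, I may assume $w$ contains no factor $aa$; and because $b$ acts bijectively on $Q$, the last letter of $w$ cannot be $b$. Hence $w = w' a$ with $w'$ ending in $b$ (otherwise an $aa$-factor would arise). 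I would then argue that the image $S := \delta(Q, w')$ must be exactly $\{1,n\}$: it satisfies $|S| \ge 2$ by minimality, while $\delta(S,a)$ is a singleton, so $S$ sits inside a common $a$-preimage, and the only two-element such preimage is $\{1,n\}$.

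To finish, I would rewrite $w'$ over the alphabet $\{b,c\}$: since $w'$ has no $aa$ and ends in $b$, every occurrence of $a$ in $w'$ is immediately followed by $b$, and replacing each such $ab$-factor by $c$ yields a well-defined word $v$. If $w'$ contains $k$ letters $a$ and $m$ letters $b$, then $|v| = m$ (each of the $k$ letters $c$ absorbs one of the letters $b$). The action of $v$ on $\mathrsfs{W}_n$ coincides with that of $w'$ on $\mathrsfs{C}_n$, so $v$ sends $Q$ to $\{1,n\}$; appending $c$ (which maps both $1$ and $n$ to $2$) turns $v$ into a \sw\ $vc$ for $\mathrsfs{W}_n$. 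Theorem~\ref{theorem:anan} then gives $m+1 \ge n^2 - 3n + 3$, i.e.\ $m \ge n^2 - 3n + 2$. Independently, $b$ preserves the cardinality of every subset of $Q$, while $a$ can decrease it by at most $1$ (it is injective except for the identification $a(1)=a(n)=1$), so reducing from $|Q|=n$ down to $|S|=2$ requires at least $n-2$ occurrences of $a$, giving $k \ge n-2$. Combining, $|w| = k+m+1 \ge (n-2) + (n^2-3n+2) + 1 = (n-1)^2$. The main delicate step I foresee is pinning $S$ down to exactly $\{1,n\}$; once that is in hand, the rest is routine length accounting between the two alphabets.
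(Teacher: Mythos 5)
Your proposal is correct and follows essentially the same route as the paper's proof: the same reset word $(ab^{n-1})^{n-2}a$ for the upper bound, and for the lower bound the same reduction of $w=w'a$ to a word over $\{b,c\}$ with $c=ab$, recognition of the induced automaton as $\mathrsfs{W}_n$, an appeal to Theorem~\ref{theorem:anan}, and the same cardinality count (you count the letters $a$ in $w'$ where the paper counts the letters $c$ in $v$, which is equivalent). Your explicit justification that $\delta(Q,w')=\{1,n\}$ via the $a$-preimage classes is a welcome detail the paper leaves implicit.
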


\begin{proof}
It is easy to see that the word $(ab^{n-1})^{n-2}a$ of length
$n(n-2)+1=(n-1)^2$ is a reset word for $\mathrsfs{C}_n$.

Now let $w$ be a \ssw\ for $\mathrsfs{C}_n$. Since the letter $b$
acts on $Q$ as a cyclic permutation, the word $w$ cannot end with $b$.
(Otherwise removing the last letter gives a shorter \sw.) Thus, we can
write $w$ as $w = w'a$ for some $w'\in\{a,b\}^*$ such that the image
of $Q$ under the action of $w'$ is precisely the set $\{1,n\}$.

Since the letter $a$ fixes each state in its image $\{1,2,\dots,n-1\}$,
every occurrence of $a$ in $w$ except the last one is followed by an
occurrence of $b$. (Otherwise $a^2$ occurs in $w$ as a factor and
reducing this factor to just $a$ results in a shorter \sw.) Therefore,
if we let $c=ab$, then the word $w'$ can be rewritten into a word $v$
over the alphabet $\{b,c\}$. The actions of $b$ and $c$ induce
a new automaton on the state set $Q$; this induced automaton (shown in
Fig.\,\ref{fig:cerny-n} on the right) is obviously isomorphic to the
automaton $\mathrsfs{W}_n$. Since $w'$ and $v$ act on $Q$ in the same way,
the word $vc$ is a \sw\ for the induced automaton. By Theorem~\ref{theorem:anan}
the length of $vc$ (as a word over $\{b,c\}$) is at least $n^2-3n+3$.
Since the action of $b$ on any set $S$ of states cannot change the
cardinality of $S$ and the action of $c$ can decrease the cardinality
by~1 at most, the word $vc$ must contain at least $n-1$ occurrences of
$c$. Hence the length of $v$ over $\{b,c\}$ is at least $n^2-3n+2$ and
$v$ contain at least $n-2$ occurrences of $c$. Since each occurrence of
$c$ in $v$ corresponds to an occurrence of the factor $ab$ in $w'$, we
conclude that the length of $w'$ over $\{a,b\}$ is at least $n^2-3n+2+n-2=n^2-2n$.
Thus, $|w|=|w'a|\ge n^2-2n+1=(n-1)^2$.
\end{proof}

We have found two more series of slowly \sa\ related to Wielandt's digraphs
$W_n$: a series with reset length $n^2-3n+2$ and another one with reset length
$n^2-4n+6$. These two series will be presented in an extended version of
the paper.

Now we discuss a few series related to the digraph $D_n$ defined by the second
matrix in \eqref{wielandt}. The digraph is obtained from $W_n$ by adding
the edge $(n-1,1)$. Fig.\,\ref{fig:dulmage} shows the digraph $D_n$ and
its colorings $\mathrsfs{D}'_n$ and $\mathrsfs{D}''_n$.
\begin{figure}[ht]
\begin{center}
\unitlength .45mm
\begin{picture}(64,56)(30,-72)
\gasset{Nw=16,Nh=16,Nmr=8}
\node(n0)(32.0,-16.0){1}
\node(n1)(0,-40.0){$n$} \node(n2)(64.0,-40.0){2}
\node(n3)(12.0,-72.0){$n{-}1$} \node(n4)(52.0,-72.0){3}
\drawedge[ELdist=2.0](n1,n0){} \drawedge[ELdist=1.5](n2,n4){}
\drawedge[ELdist=1.7](n0,n2){} \drawedge[ELdist=1.7](n3,n0){}
\drawedge[ELdist=1.7](n3,n1){} \drawedge[ELdist=2.0](n1,n2){}
\put(31,-73){$\dots$}
\end{picture}
\begin{picture}(64,56)(0,-72)
\gasset{Nw=16,Nh=16,Nmr=8}
\node(n0)(32.0,-16.0){1}
\node(n1)(0,-40.0){$n$} \node(n2)(64.0,-40.0){2}
\node(n3)(12.0,-72.0){$n{-}1$} \node(n4)(52.0,-72.0){3}
\drawedge[ELdist=2.0](n1,n0){$b$} \drawedge[ELdist=1.5](n2,n4){$a, b$}
\drawedge[ELdist=1.7](n0,n2){$a,b$} \drawedge[ELdist=1.7](n3,n0){$a$}
\drawedge[ELdist=1.7](n3,n1){$b$} \drawedge[ELdist=2.0](n1,n2){$a$}
\put(31,-73){$\dots$}
\end{picture}
\begin{picture}(64,56)(-30,-72)
\gasset{Nw=16,Nh=16,Nmr=8}
\node(n0)(32.0,-16.0){1}
\node(n1)(0,-40.0){$n$} \node(n2)(64.0,-40.0){2}
\node(n3)(12.0,-72.0){$n{-}1$} \node(n4)(52.0,-72.0){3}
\drawedge[ELdist=2.0](n1,n0){$a$} \drawedge[ELdist=1.5](n2,n4){$a, b$}
\drawedge[ELdist=1.7](n0,n2){$a, b$} \drawedge[ELdist=1.7](n3,n0){$a$}
\drawedge[ELdist=1.7](n3,n1){$b$} \drawedge[ELdist=2.0](n1,n2){$b$}
\put(31,-73){$\dots$}
\end{picture}
\end{center}
\caption{The digraph $D_n$ and its colorings $\mathrsfs{D}'_n$ and $\mathrsfs{D}''_n$}\label{fig:dulmage}
\end{figure}

\begin{theorem}
\label{theorem:new series}
The automata $\mathrsfs{D}'_n$ and $\mathrsfs{D}''_n$ are synchronizing with reset lengths
$n^2-3n+4$ and $n^2-3n+2$ respectively.
\end{theorem}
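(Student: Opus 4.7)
The plan is to mimic the proof scheme of Theorem~\ref{theorem:anan}. For the upper bounds I would exhibit explicit reset words, and for the lower bounds I would first determine the common image state $j$ of a shortest reset word $w$ by an incoming-edge analysis, then apply the cycle-through-$j$ argument via Lemma~\ref{sylvester}, and finally, in the case of $\mathrsfs{D}'_n$, refine the bound by examining paths of specified length that start from two selected vertices. The recurring technical input is that every cycle of $D_n$ has length a non-negative integer combination of $n$ and $n-1$: the three simple cycles are $1\to 2\to\cdots\to n\to 1$, $1\to 2\to\cdots\to n{-}1\to 1$, and $2\to 3\to\cdots\to n\to 2$, of lengths $n$, $n-1$, and $n-1$. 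Thus Lemma~\ref{sylvester} controls which cycle and path lengths are realizable in $D_n$.

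For the upper bounds I would verify directly that the word $(ab^{n-2})^{n-3}\,a\,b^{n-1}\,a$, of length $(n-3)(n-1)+1+(n-1)+1 = n^2-3n+4$, resets $\mathrsfs{D}'_n$: tracing letter by letter, each block $ab^{n-2}$ shrinks the current image by one, and the trailing $b^{n-1}a$ performs the final merge $\{1,n\}\mapsto\{2\}$. An analogous, slightly less uniform, construction yields a reset word of length $(n-1)(n-2)=n^2-3n+2$ for $\mathrsfs{D}''_n$, whose correctness is checked by the same direct tracing.

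For the lower bound on $\mathrsfs{D}''_n$ the situation is clean. Let $j$ be the common image state of a shortest reset word $w$. A case check of the incoming labeled edges at each vertex, combined with the ``delete the last letter'' argument whenever the last letter or the last edge is forced, leaves only $j\in\{1,2\}$. For either choice, prefixing $w$ by an arbitrary word shows that a cycle of every length $\ell\ge|w|$ passes through $j$, and Lemma~\ref{sylvester} yields $|w|\ge n(n-1)-n-(n-1)+1 = n^2-3n+2$, matching the upper bound.

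For $\mathrsfs{D}'_n$ the same style of analysis pins down $j=2$ and forces the last letter of $w$ to be $a$, so that writing $w=ua$ we find the image of $u$ equal to the whole preimage $\{1,n\}$ of $2$ under $a$. Cycles through $2$ again give $|w|\ge n^2-3n+2$, and the exclusion of $|w|=n^2-3n+2$ is copied from Theorem~\ref{theorem:anan}: since $1$ has $(1,2)$ as its unique outgoing edge, the path traced by $u$ from $1$ would require a continuation from $2$ of length $n^2-3n$ into $\{1,n\}$, and a short Sylvester computation rules out that length. The main obstacle is then the exclusion of $|w|=n^2-3n+3$, i.e.\ $|u|=n^2-3n+2$. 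Here I would compare the first letter $\alpha_1$ of $u$ as forced independently by the two starting vertices $n-1$ and $n$. From $n-1$: if $\alpha_1=a$ the path enters $1$ and then needs a continuation of length $n^2-3n+1$ from $1$ into $\{1,n\}$, which is impossible by the same Sylvester obstruction; hence the path uses the edge $(n-1,n)$ and $\alpha_1=b$. From $n$: symmetrically, the edge $(n,1)$ is ruled out for the identical reason, so the path uses $(n,2)$ and $\alpha_1=a$. The resulting contradiction $a=\alpha_1=b$ forces $|w|\ge n^2-3n+4$, completing the argument.
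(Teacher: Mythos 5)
Your plan follows the paper's strategy for Theorem~\ref{theorem:new series} quite closely: explicit words for the upper bounds, the closed-walk-length argument through the merge state combined with Lemma~\ref{sylvester} for the base lower bound, and path-tracing from selected vertices to exclude the two intermediate lengths for $\mathrsfs{D}'_n$. Your word $(ab^{n-2})^{n-3}ab^{n-1}a$ for $\mathrsfs{D}'_n$ is correct and has the right length (the paper uses $(ab^{n-2})^{n-2}ba$), and your two-starting-vertex contradiction for excluding $|w|=n^2-3n+3$ is a valid variant of the paper's shortcut, which instead notes that a minimal reset word cannot begin with the permutation letter $b$, so its first letter is $a$, sending $n-1$ to $1$ and reducing to the already-excluded length $n^2-3n+2$. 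The determination of the merge state for $\mathrsfs{D}''_n$ is superfluous: the closed-walk argument works for any target vertex, which is all the lower bound $n^2-3n+2$ needs.

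Two points require more care than you indicate. First, you never exhibit a reset word of length $n^2-3n+2$ for $\mathrsfs{D}''_n$; one that works is $(ba^{n-1})^{n-3}ba$, and without some explicit word the upper bound for that automaton is simply unproved. Second, by tracing the path of $u$ (ending in $\{1,n\}$) rather than of the full word $w$ (ending at $2$), your exclusions for $\mathrsfs{D}'_n$ need the non-representability of $n^2-4n+2$ and $n^2-4n+1$ as non-negative integer combinations of $n-1$ and $n$ (these arise as walk lengths from $2$ to $1$, from $2$ to $n$, and from $1$ to $n$ after subtracting the relevant simple-path lengths). Both numbers lie strictly below the Frobenius number $n(n-1)-n-(n-1)=n^2-3n+1$, so Lemma~\ref{sylvester} does not rule them out by itself; a short residue computation (e.g.\ reducing modulo $n$ or $n-1$) does, and it must be supplied. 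The paper's formulation avoids this entirely: working with $w$ and the target vertex $2$, the only non-representable length it ever invokes is $n^2-3n+1$ itself, which is exactly the number covered by Lemma~\ref{sylvester}. Your argument is correct once these computations are written out, but as stated the appeal to ``a short Sylvester computation'' papers over a step the cited lemma does not deliver.
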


\begin{proof}
The proof of Theorem~\ref{theorem:new series} is similar to that of Theorem~\ref{theorem:anan}.
First we prove that the automaton $\mathrsfs{D}'_n$ is synchronizing and its reset length
is $n^2-3n+4$.

It is routine to verify that the word $(ab^{n - 2})^{n - 2}ba$, whose length
is $(n - 1)(n - 2) + 2 = n^2 -3n + 4$, is a reset word for $\mathrsfs{D}'_n$.

Now let $w$ be a \ssw\ for $\mathrsfs{D}'_n$. Since the letter $b$ acts
as cyclic permutation of $Q$,  the word $w$ ends with $a$. The action of $a$
sends the states $n$ and $1$ to the state 2. Therefore the action of $w$
also brings the automaton to the state $2$ and so does the action of
the word $uw$ for any $u\in\{a,b\}^*$. Hence, for every $\ell\ge|w|$,
there is a path of length $\ell$ in $D_n$ from any given vertex $i$
to~2. In particular, setting $i=2$, we conclude that for every $\ell\ge|w|$
there is a cycle of length $\ell$ in $D_n$. The digraph $D_n$ has only
three simple cycles: one of length $n$ and two of length $n-1$.
Since each cycle of $D_n$ must be composed from these three cycles,
every number $\ell\ge|w|$ must be expressible as a non-negative integer combination
of $n$ and $n-1$. Then Lemma~\ref{sylvester} implies that $|w|>n(n-1)-n-(n-1)=n^2-3n+1$.

Suppose that $|w|=n^2-3n+2$. Then in $D_n$ there should be a path of this length
from the vertex~1 to the vertex~2. The only outgoing edge of~1 is $(1,2)$, and thus,
in the path it must be followed by a cycle of length $n^2-3n+1$. No cycle of such
length may exist by Lemma~\ref{sylvester}. Hence $|w|\ge n^2-3n+3$.

Finally, suppose that $|w|=n^2-3n+3$. There should be a path of this length
from the vertex~$n-1$ to the vertex~2. Since $b$ is a cyclic permutation,
the first letter of $w$ must be $a$ but $a$ sends $n-1$ to $1$. Therefore
in $D_n$ there is a path of length $n^2-3n+2$ from the vertex~$1$ to the vertex~2.
In the previous paragraph we have shown that this is not possible. Hence $|w|\ge n^2-3n+4$.

\medskip

Now we show that the automaton $\mathrsfs{D}''_n$ is synchronizing and
its reset length is $n^2-3n+2$.

Again, the word $(ba^{n - 1})^{n - 3}ba$ of length $n(n - 3) + 2 = n^2 -3n + 2$
is easily seen to be a reset word for $\mathrsfs{D}''_n$.

Now let $w$ be a \sw\ for $\mathrsfs{D}''_n$. Then, as above, each
$\ell\ge|w|$ must be the length of a cycle in the digraph $D_n$
whence $|w|>n(n-1)-n-(n-1)=n^2-3n+1$ by Lemma~\ref{sylvester}.
\end{proof}

The series $\mathrsfs{D}'_n$ is of interest because
for $n>6$ it yields the maximum known value of reset length beyond the \v{C}ern\'y series
$\mathrsfs{C}_n$ and also the maximum known value of reset length for \sa\ without loops.
The series $\mathrsfs{D}''_n$ also enjoys an extremal property: it provides the maximum
known value of reset length for \sa\ in which no letter acts as a permutation.

One more series of slowly \sa\ related to the digraphs $D_n$ has reset length
$n^2-4n+6$. It will be presented in an extended version of this paper.

Except the \v{C}ern\'y series $\mathrsfs{C}_n$, the only infinite series of
2-letter slowly \sa\ published so far was the series $\mathrsfs{B}_n$ ($n>3$
is odd) constructed in~\cite{AVZ}. The automaton $\mathrsfs{B}_n$ has $Q=\{1,2,\dots,n\}$
as its state set, and its input letters $a$ and $b$ act on $Q$ as follows:
$$\delta(i,a)=\begin{cases}
i &\text{if } i<n-1,\\
1 &\text{if } i=n-1,\\
2 &\text{if } i=n;
\end{cases}\quad
\delta(i,b)=\begin{cases}
i+1 &\text{if } i<n,\\
1 &\text{if } i=n.
\end{cases}$$
The automaton $\mathrsfs{B}_n$ is shown in Fig.\,\ref{fig:avz-n} on the left.

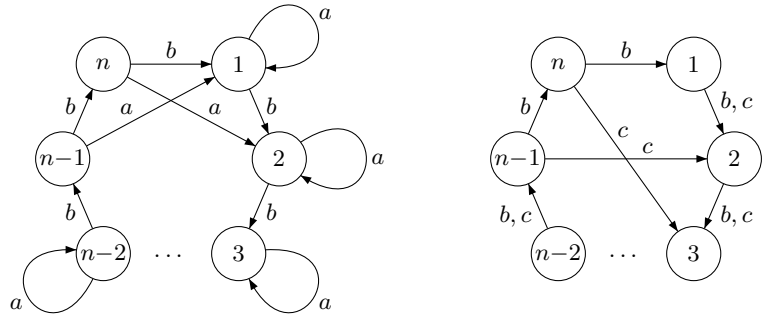
\begin{figure}[ht]
\begin{center}
\unitlength .45mm
\begin{picture}(72,66)(30,-76)
\gasset{Nw=16,Nh=16,Nmr=8}
\node(n1)(56.0,-16.0){1}
\node(n2)(68.0,-44.0){2}
\node(n3)(56.0,-72.0){3}
\node(n4)(16.0,-72.0){$n{-}2$}
\node(n5)(4.0,-44.0){$n{-}1$}
\node(n6)(16.0,-16.0){$n$}
\drawedge[ELdist=1.7](n1,n2){$b$}
\drawedge[ELdist=1.7](n2,n3){$b$}
\drawedge[ELdist=1.7](n4,n5){$b$}
\drawedge[ELdist=1.7](n5,n6){$b$}
\drawedge[ELdist=1.7](n6,n1){$b$}
\drawedge[ELdist=1.7,ELpos=40](n5,n1){$a$}
\drawedge[ELdist=1.7,ELpos=60](n6,n2){$a$}
\drawloop[ELdist=1.5,loopangle=30](n1){$a$}
\drawloop[ELdist=1.5,loopangle=0](n2){$a$}
\drawloop[ELdist=1.5,loopangle=-30](n3){$a$}
\drawloop[ELdist=1.5,loopangle=210](n4){$a$}
\put(31,-73){$\dots$}
\end{picture}
\begin{picture}(72,66)(-30,-76)
\gasset{Nw=16,Nh=16,Nmr=8}
\node(n1)(56.0,-16.0){1}
\node(n2)(68.0,-44.0){2}
\node(n3)(56.0,-72.0){3}
\node(n4)(16.0,-72.0){$n{-}2$}
\node(n5)(4.0,-44.0){$n{-}1$}
\node(n6)(16.0,-16.0){$n$}
\drawedge[ELdist=1.7](n1,n2){$b,c$}
\drawedge[ELdist=1.7](n2,n3){$b,c$}
\drawedge[ELdist=1.7](n4,n5){$b,c$}
\drawedge[ELdist=1.7](n5,n6){$b$}
\drawedge[ELdist=1.7](n6,n1){$b$}
\drawedge[ELdist=1.7,ELpos=60](n5,n2){$c$}
\drawedge[ELdist=1.7,ELpos=40](n6,n3){$c$}
\put(31,-73){$\dots$}
\end{picture}
\end{center}
\caption{The automaton $\mathrsfs{B}_n$ and the automaton induced by the actions of $b$ and $c=ab$}\label{fig:avz-n}
\end{figure}

\begin{theorem}[{\mdseries\cite[Theorem~1.1]{AVZ}}]
\label{theorem:avz}
If $n>3$ is odd, then the automaton $\mathrsfs{B}_n$ is synchronizing and
its reset length is $n^2-3n+2$.
\end{theorem}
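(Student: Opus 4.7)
The proof follows the template of Theorem~\ref{theorem:cerny}: first exhibit a reset word of the stated length, then use the substitution $c=ab$ to reduce the lower-bound analysis to a two-letter induced automaton where Sylvester's lemma applies. The upper bound is routine; one verifies directly that a suitable word of length $n^2-3n+2$ resets $\mathrsfs{B}_n$.

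For the lower bound, let $w$ be a shortest reset word. Since $b$ is a cyclic permutation of $Q$, $w$ must end in $a$; write $w=w'a$. The letter $a$ fixes its image $\{1,\ldots,n-2\}$ pointwise, so $w$ cannot contain $aa$ as a factor (reducing $aa$ to $a$ would shorten $w$), and consequently every $a$ in $w'$ is immediately followed by $b$. Setting $c=ab$, rewrite $w'$ as $v\in\{b,c\}^*$, so that $|w|=|v|+\#_c(v)+1$ and the word $vc$ resets the induced automaton $\mathrsfs{B}_n'$ pictured on the right of Fig.~\ref{fig:avz-n}. Because $b$ preserves cardinality while $c$ drops it by at most one, $vc$ requires at least $n-1$ occurrences of $c$, so $\#_c(v)\ge n-2$. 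It therefore suffices to prove that the reset length of $\mathrsfs{B}_n'$ is at least $(n-2)^2$; the theorem will then follow from the identity $|w|=|vc|+\#_c(v)\ge(n-2)^2+(n-2)=n^2-3n+2$.

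The crux is the lower bound on $\mathrsfs{B}_n'$. Its underlying digraph has exactly three simple cycles: a Hamilton cycle of length $n$ and two cycles of length $n-2$ (through $\{2,\ldots,n-1\}$ and through $\{3,\ldots,n\}$ respectively). Since $n$ is odd, $\gcd(n,n-2)=1$, and by Lemma~\ref{sylvester} the Frobenius number of $\{n,n-2\}$ is $n^2-4n+2$. The closed-walk reasoning used in Theorem~\ref{theorem:new series} then yields the raw bound $|vc|\ge n^2-4n+3$, which is one short of $(n-2)^2$. To close the gap I would refine exactly as in Theorem~\ref{theorem:new series}: $b$ is a permutation, so the shortest reset word for $\mathrsfs{B}_n'$ ends in $c$, and inspection of the fibres $c^{-1}(2)=\{1,n-1\}$ and $c^{-1}(3)=\{2,n\}$ shows that the sync state must be $2$ or $3$. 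If it is $2$, the unique outgoing edge $(1,2)$ from vertex $1$ lets one peel a single letter off a hypothetical length-$(n^2-4n+3)$ reset word and obtain a forbidden cycle of length $n^2-4n+2$ at vertex $2$; if it is $3$, the same peeling applied at vertex $2$ (whose only outgoing edge is $(2,3)$) forbids a cycle of length $n^2-4n+2$ at vertex $3$. Either way Lemma~\ref{sylvester} is contradicted, forcing $|vc|\ge(n-2)^2$. This $+1$ refinement is the principal obstacle: the bare Sylvester argument is not quite tight, and one must combine the vertex-peeling trick with a case split on the two possible sync vertices to squeeze out the last unit.
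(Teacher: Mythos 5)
Your overall strategy coincides with the one the paper itself sketches for this theorem (the paper defers the details to an extended version), and the portion of your argument devoted to the induced automaton $\mathrsfs{B}_n'$ is sound: its digraph has exactly three simple cycles of lengths $n$, $n-2$, $n-2$, Lemma~\ref{sylvester} gives $n^2-4n+2$ as the largest non-representable integer, the synchronizing state must be $2$ or $3$ because those are the only states with a two-element fibre under $c$, and peeling the unique outgoing edge of vertex $1$ (resp.\ vertex $2$) correctly upgrades the crude bound to $|vc|\ge(n-2)^2$. The reduction $w=w'a$, the elimination of the factor $aa$, and the identity $|w|=|vc|+\#_c(v)$ are also all correct.

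The gap is the counting step ``$c$ drops the cardinality by at most one, hence $\#_c(v)\ge n-2$.'' In $\mathrsfs{C}_n$ the letter $a$ has deficiency $1$, so the analogous count in the proof of Theorem~\ref{theorem:cerny} is valid; but in $\mathrsfs{B}_n$ the letter $a$ identifies $n-1$ with $1$ \emph{and} $n$ with $2$, i.e., it has deficiency $2$ (this is exactly the point of the series from \cite{AVZ}, whose title is ``Synchronizing automata with a letter of deficiency 2''), and so does $c=ab$, which merges $\{1,n-1\}$ into $2$ and $\{2,n\}$ into $3$. A single application of $c$ to a set containing all of $1,2,n-1,n$ decreases its cardinality by $2$; in particular the first $c$ of $vc$ is applied to $Q$ itself and already drops the cardinality from $n$ to $n-2$, and for $n\ge 7$ further drops by $2$ can occur later as well. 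Consequently $\#_c(v)\ge n-2$ does not follow, and without it the final estimate $|w|=|vc|+\#_c(v)\ge(n-2)^2+(n-2)$ collapses: a priori a reset word could trade occurrences of $c$ for extra $b$'s, with $\#_c(vc)$ as small as roughly $(n-1)/2$, and then $|vc|\ge(n-2)^2$ alone is far from enough. To repair the proof you must control this trade-off -- for instance, bound from below the weighted length of $vc$ (weight $2$ on $c$-steps, weight $1$ on $b$-steps) by a weighted analogue of the cycle argument, or show that every saving of a $c$ forces a compensating increase in $|v|$. This is the genuinely delicate part of the theorem, and your proposal does not yet contain it. (A complete write-up should also exhibit the explicit reset word of length $n^2-3n+2$ rather than merely asserting its existence.)
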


The proof of Theorem~\ref{theorem:avz} in \cite{AVZ} is quite involved. Now we can easily
prove this result using an argument similar to that in our proof of Theorem~\ref{theorem:cerny}.
The key observation is that the automaton induced by the actions of $b$ and $c=ab$ on the set $Q$
as shown in Fig.\,\ref{fig:avz-n} on the right is nothing but a coloring of one of the digraphs
with exponent $n^2-3n+2$, namely, of the digraph defined by the second matrix in \eqref{odd island}.
The details of the proof will appear in an extended version of this paper.

\section{Experiments}
\label{experiments}

Here we briefly describe the settings of our experiments. Recall that a DFA
$\mathrsfs{A}=\langle Q,\Sigma,\delta\rangle$ is said to be \emph{initially-connected}
if there exists a state $q_0\in Q$ from which every state $q\in Q$ is reachable,
that is, $q=\delta(q_0,w)$ for some $w\in\Sigma^*$. In general a \san\ need not
be initially-connected. However, it is well known that when studying the \v{C}ern\'{y}
conjecture, we may restrict ourselves to DFA whose underlying digraphs are strongly
connected because the validity of the conjecture can be easily reduced to this case
(see \cite{Pi78} for example). Clearly, DFA with strongly connected underlying digraphs
are initially-connected.

We used a convenient string representation of initially-connected DFA (ICDFA)
developed in~\cite{AMR} to generate all such DFA with up to 9 states and 2 input
letters. Each ICDFA was tested for synchronizability and then for each \san\
its reset length was calculated. For these tasks, we implemented standard
algorithms (see~\cite{Sa05,Vo08}) in C.

The main difficulty that had to be overcome is that the number of ICDFA dramatically
grows with the number of states. (For 9 states, there are about 700 billions
ICDFA with 2 input letters.) The problem, however, can be efficiently parallelized.
For this, a dedicated processor was programmed to generate ICDFA in portions (slices
in terminology of~\cite{AMR}) that were fed to other processors for synchronization
tests etc. The management program was written in C with MPI. Calculations organized
this way took less than a day of running a small size computer grid based on a number
of AMD Opteron 2.6 GHz processors.

All slowly synchronizing automata that we found were double-checked by running on
them the package TESTAS developed by Trahtman~\cite{Tr06}.

Our experiments have also produced some interesting statistical results that will
be discussed elsewhere.

\end{document}